\documentclass[runningheads]{llncs}

\title{
  Pacing Types for Asynchronous Stream Equations
}

\usepackage{xspace}
\usepackage{listings}
\usepackage{xcolor}
\usepackage{lstautogobble}
\usepackage{color}
\usepackage{amsmath}
\usepackage{todonotes}
\usepackage{mathpartir}
\usepackage{subcaption}
\usepackage{float}
\usepackage{stmaryrd}
\usepackage{csquotes}
\usepackage{enumitem}
\usepackage{hyperref}
\usepackage{amssymb, amsfonts}
\usepackage{cleveref}
\usepackage{orcidlink}

\begin{document}
  \author{
    Florian Kohn\orcidlink{0000-0001-9672-2398} \and 
    Arthur Correnson\orcidlink{0000-0003-2307-2296} \and 
    Jan Baumeister\orcidlink{0000-0002-8891-7483} \and 
    Bernd Finkbeiner\orcidlink{0000-0002-4280-8441}}
  \authorrunning{F. Kohn et. al.}
  
  \institute{
    CISPA Helmholtz Center for Information Security, Saarbrücken, Germany \email{\{florian.kohn,arthur.correnson,jan.baumeister,finkbeiner\}@cispa.de}
  }
  \maketitle
  \begin{abstract}
    Stream-based monitoring is a runtime verification approach where
    a \emph{monitor} aggregates streams of input data from sensors and other sources to give real-time statistics and assessments of a system's health.
    One of the central challenges in designing reliable stream-based monitors is to deal with the asynchronous nature of data streams:
    in concrete applications, the different sensors being monitored produce values at different speeds, and it is the monitor's responsibility to correctly react to the asynchronous arrival of different streams of values.
    To ease this process, modern frameworks for stream-based monitoring such as \textsc{RTLola} enable users to finely specify data synchronization policies via a system of \emph{pacing annotations}.
    While this feature simplifies the design of monitors, it can also lead users to write \emph{inconsistent} policies, where synchronization between two streams is explicitly requested via annotations, but cannot always be achieved.
    To mitigate this issue, this paper presents \emph{pacing types}, a novel type system implemented in RTLola to ensure that monitors for asynchronous streams are free of timing inconsistencies.
    We give a formal semantics to pacing annotations for a core fragment of \textsc{RTLola}, and present a soundness proof of the pacing type system.
    For an additional level of guarantees, we machine-checked the soundness proof using the Rocq proof assistant.
  \end{abstract}

  \newcommand{\CLola}{$\mu$\textsc{RTLola}\xspace}
\newcommand{\stream}[1]{{\footnotesize\texttt{#1}}\xspace}
\newcommand{\bad}{\lightning}
\newcommand{\headline}[1]{\paragraph*{\textbf{#1.}}}

\definecolor{bluekeywords}{rgb}{0.13, 0.13, 1}
\definecolor{greentypes}{rgb}{0, 0.5, 0}
\definecolor{orangecomments}{rgb}{1, 0.5, 0.1}
\definecolor{redstrings}{RGB}{171, 114, 2}
\definecolor{graynumbers}{rgb}{0.5, 0.5, 0.5}
\definecolor{goldcomments}{rgb}{0.6, 0.4, 0.08}

\lstdefinelanguage{Lola}{
  keywords=[0]{input, output, trigger, constant, import, spawn, eval, close, with, when},
  moredelim=**[is][\color{greentypes}@]{@}{@},
  keywordstyle=[0]\bfseries\color{bluekeywords},
  keywords=[1]{if, then, else, aggregate, defaults, offset, last, prev, by, or, to, sin, cos, abs, hold, over, using, over_instances},
  keywords=[2]{Variable, String, Int, Int64, UInt, UInt64, Bool, Float32, Float64, Float},
  keywordstyle=[2]\color{greentypes},
  sensitive=false,
  comment=[l]{//},
  morecomment=[s]{/*}{*/},
  morestring=[b]',
  morestring=[b]",
  literate={\\@}{@}1
}
\lstset{
    autogobble,
    language={Lola},
    columns=fullflexible,
    showspaces=false,
    showtabs=false,
    breaklines=true,
    showstringspaces=false,
    breakatwhitespace=true,
    escapeinside={(*@}{@*)},
    commentstyle=\color{orangecomments},
    keywordstyle=[1]\color{bluekeywords},
    stringstyle=\color{redstrings},
    numberstyle=\color{graynumbers},
    basicstyle=\ttfamily\small,
    frame=l,
    framesep=5pt,
    xleftmargin=15pt,
    tabsize=4,
    captionpos=b,
    mathescape,
    numbers=left,
    stepnumber=1,
}

  \section{Introduction}

Runtime verification is a safety assurance mechanism in which systems are monitored during their execution. 
Runtime verification is particularly useful for increasing the reliability of complex cyber-physical systems (CPS) for which obtaining a model for static verification is difficult or impossible.
Stream-based monitoring is an approach to runtime verification in which a monitor
collects data through input streams, which are then combined and aggregated in output streams to give real-time assessments of a system's health.


While, there exist many formalisms and programming languages to describe stream-based monitors \cite{d2005lola,DBLP:conf/cav/FaymonvilleFSSS19,volostream,DBLP:conf/rv/KallwiesLSSTW22,DBLP:conf/fm/GorostiagaS21,DBLP:conf/fm/PerezGD24}, we focus on the stream-based specification language \textsc{RTLola}.
In \textsc{RTLola}, a monitor is specified by listing equations relating input and output streams.
Certain boolean output streams, called triggers, indicate whether the system being monitored is behaving as expected.

From a system perspective, monitors are passive components.
They react to inputs from the system as they arrive.
If different system sensors generate inputs at varying rates, the monitor must handle the asynchronous arrival of data.
To simplify this process, \textsc{RTLola} features a system of \emph{pacing annotations} where
each output stream is labeled with additional information indicating when a stream must
produce a fresh value in reaction to the asynchronous arrival of new input values.
While the pacing annotation system of \textsc{RTLola} has proven to be
useful in practical applications \cite{Lola2.0,baumeister2020rtlola,volostream}, it also introduces a critical challenge: ensuring the \emph{consistency} of pacing annotations.
Indeed, it is easy to write annotations that are incompatible with one another
or that over-constrain a stream by requiring it to produce values at times when its dependencies are not guaranteed to be available.
Ensuring consistency of pacing annotations is all the more difficult because the semantics of \textsc{RTLola}'s pacing annotations have, thus far, never been rigorously formalized.
In this paper, we address this issue in two ways.
We propose the first formal semantics of \textsc{RTLola}'s pacing annotations, and
we present a type system to check the consistency of pacing annotations.

\paragraph{Contributions}
In summary, we make the following contributions:
\begin{enumerate}
    \item We formalize the essence of \textsc{RTLola}'s system of pacing annotations.
    In particular, we give a formal semantics to stream equations decorated with pacing annotations.
    \item We present a \emph{type system} to check the consistency of pacing annotations, and formally prove its soundness. Concretely, we prove that well-typed annotated equations have at least one solution.
    \item Finally, to obtain an additional level of reliability, we machine-checked the correctness proof of the type system using the Rocq proof assistant.
\end{enumerate}

\paragraph{Structure of the paper}
\Cref{sec:overview} provides an overview of \textsc{RTLola} and its system of pacing annotations. It also illustrates the inconsistency problem.
\Cref{sec:pacing_types} formalizes semantics of pacing annotations for a core fragment of \textsc{RTLola}, and introduces a type system to check consistency of pacing annotations.
\Cref{sec:proof} presents a soundness proof of the type system.
Finally, \Cref{sec:limitations} describes two extensions of the type system.

\subsection{Related Work}
Runtime verification has been studied since the late 90s~\cite{DBLP:conf/ecrts/KimVBKLS99}.
Specifications are often expressed in temporal logics such as LTL~\cite{bauer2011runtime} and MTL~\cite{DBLP:conf/cav/BasinKZ17}.
Applications include Markov Decision Processes~\cite{DBLP:conf/cav/JungesTS20,DBLP:conf/cav/HenzingerKKM23}, cyber-physical systems~\cite{volostream}, and software~\cite{DBLP:conf/icse/JinMLR12}.
Stream-based monitoring framework like Lola~\cite{d2005lola}, \textsc{RTLola}~\cite{DBLP:conf/fm/BaumeisterFKS24}, Tessla~\cite{DBLP:conf/rv/KallwiesLSSTW22}, and HStriver~\cite{DBLP:conf/fm/GorostiagaS21} provide expressive specification languages coupled
with efficient monitoring infrastructures.

Synchronous programming languages such as LUSTRE~\cite{DBLP:journals/pieee/HalbwachsCRP91}, ESTEREL~\cite{DBLP:conf/concur/BerryC84}, and Z\'elus~\cite{DBLP:conf/emsoft/BenvenisteBCP11} share similarities with stream-based languages for monitoring.
In particular, both paradigms need to address timing inconsistencies (ensuring that values exists when they are requested).
Synchronous languages use notions of \emph{clocks}~\cite{DBLP:conf/fpca/GautierG87,DBLP:conf/emsoft/ColacoP03,DBLP:conf/mpc/MandelPP10}, and associated static analyses to resolve timing issues. We propose a \emph{type analysis} to address the timing inconsistencies introduced by the expressive pacing annotations of \textsc{RTlola}.

We note that type-based approach have already been explored in similar contexts.
For example,~\cite{DBLP:conf/hybrid/BenvenisteBCPP14} introduces type systems to detect \emph{non-causal} equations, a property related to the existence of models in stream-based monitoring.
Other type systems address additional program properties; for example, MARVeLus employs a refinement type system that reasons about temporal behavior using LTL~\cite{DBLP:conf/focs/Pnueli77} specifications~\cite{DBLP:journals/pacmpl/ChenMABJSWZJ24}.

  \section{Asynchronous Streams in RTLola}\label{sec:overview}

This section provides a high-level overview of \textsc{RTLola} and its pacing annotations using a battery management system as an example.
In \textsc{RTLola}, monitors are specified by declaring a list of named input and output streams.
As an example, consider the following \textsc{RTLola} specification which 
monitors excessive power draw:
\begin{center}
  \begin{lstlisting}[label=ex:starting]
input  battery_lvl
output drain   := battery_lvl.prev(or: battery_lvl) - battery_lvl
output warning := drain > 5
  \end{lstlisting}
\end{center}

The input stream \stream{battery\_lvl} captures the current battery level measurement.
The output stream \stream{drain} computes the charge loss by subtracting the current battery level from the previous one, accessed via \lstinline|battery_lvl.prev(or:...)|.
The argument after \lstinline|or:| provides a default value for the first evaluation step.
This example illustrates the core concepts of input and output streams and the use of stream offsets to express temporal properties.

\headline{Handling Asynchronous Inputs}

In practice, a monitor often requires access to multiple sensors to produce meaningful verdicts about the system's health.
For example, the previous specification can be refined to monitor that the battery does not overheat while it is charging.
For this, an additional input stream \stream{temperature} captures the current temperature of the battery being monitored:
\begin{center}
  \begin{lstlisting}[label=ex:async_spec]
    input battery_lvl: Int
    input temperature: Int
    output drain := battery_lvl.prev(or: battery_lvl) - battery_lvl
    output warning := drain < 0 && temperature > 50
  \end{lstlisting}
\end{center}

Importantly, \textsc{RTLola} does not make the assumption that streams are synchronized: different inputs might arrive at different rates.
This raises the question of how to interpret an expression like \lstinline{drain < 0 && temperature > 50} in the previous specification.
More precisely, it is not clear what the value of this expression should be if one of the two inputs is not available at the time of evaluation.
For example, the monitor could take the last defined value of the stream or wait for all values to be available.
However, both solutions would negatively impact the precision of the monitor.
Implicitly taking the last defined values would allow to mix
measurements performed at very different time points and whose comparison is, therefore, difficult to interpret.
Similarly, waiting for all values to arrive might result in dangerous situations being missed.
Instead, \textsc{RTLola} permits more fine-grained handling of the asynchronous arrival of data via a combination of \emph{pacing annotations} and a selection of synchronous and asynchronous access operators.


\headline{Synchronous and Asynchronous Accesses}


To facilitate the handling of multiple inputs arriving at different paces, \textsc{RTLola} distinguishes between \emph{synchronous} and \emph{asynchronous} stream accesses.
A synchronous access requires the stream being accessed (i.e., the stream the access refers to) to have a value at the same time as the accessing stream (i.e., the stream that contains the access in its expression).
Direct accesses (such as \lstinline{drain} and \lstinline{temperature} in the previous example) and \lstinline{prev} accesses are synchronous by design.
This means that in the previous example, the stream \lstinline{warning}
can only produce a value when \emph{both} \lstinline{drain} and \lstinline{temperature} are also producing a new value simultaneously.
This restriction can be problematic if different sensors generate values at arbitrary, non-overlapping time points.
In particular, this could prevent \stream{warning} from ever being evaluated, thus never issuing the intended warning.


To address this issue, \textsc{RTLola} also provides an asynchronous \lstinline|hold| access.
A \lstinline|hold| access retrieves the last produced value of the accessed stream, regardless of when it was generated.
If no such value exists, a default value is returned, similar to \lstinline|prev|.
Unlike \lstinline|prev| though, the \lstinline|hold| operator may also reference the current value if the accessed stream produces a value at the same time as the accessing stream.
Using a \lstinline|hold| access the specification in \Cref{ex:async_spec} can be refined as follows: \begin{center}
  \begin{lstlisting}[label=ex:async_spec_hold]
    input battery_lvl: Int
    input temperature: Int
    output drain := battery_lvl.prev(or: battery_lvl) - battery_lvl
    output warning := drain < 0 && temperature.hold(or: 0) > 50
  \end{lstlisting}
\end{center}

Replacing the synchronous access to \stream{temperature} in \stream{warning} with a \lstinline|hold| eliminates the requirement for \stream{temperature} and \stream{battery\_lvl} to update simultaneously, allowing \stream{warning} to be evaluated whenever a new battery level is measured.
Figures \ref{fig:timing_diagram:sync} and \ref{fig:timing_diagram:hold} illustrate the difference between synchronous and asynchronous access on an example trace.
Black continuous arrows represent synchronous dependencies, while light gray arrows indicate dependencies due to default values and dashed arrows indicate asynchronous dependencies.
On the left, the stream
\stream{warning} accesses \stream{temperature} \emph{synchronously}, via \lstinline|prev|.
On the right, \stream{temperature} is accessed \emph{asynchronously} via \lstinline|hold|.

\begin{figure}[h]
  \begin{subfigure}{0.48\textwidth}
    \includegraphics[width=\linewidth]{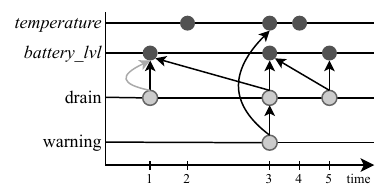}
    \caption{Synchronous access to \lstinline|temperature|}
    \label{fig:timing_diagram:sync}
  \end{subfigure}
  \hfill
  \begin{subfigure}{0.48\textwidth}
      \includegraphics[width=\textwidth]{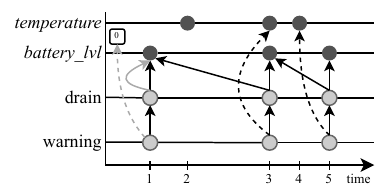}
      \caption{Asynchronous access to \lstinline|temperature|}
      \label{fig:timing_diagram:hold}
  \end{subfigure}
  \caption{Synchronous VS asynchronous accesses.}
  \label{fig:timing_diagram}
\end{figure}

We note that with a synchronous access, \stream{warning} only produces a value at time three, when both \stream{temperature} and \stream{battery\_lvl} receive new values.
Using an asynchronous access instead, \stream{warning} also produces values at steps one and five in addition to time step three, resulting in a more precise verdict.


\headline{Pacing Annotations}
One could argue that, the intent of the \stream{warning} stream is still not fully captured.
Indeed, even with the help of asynchronous \lstinline|hold| access, \stream{warning} only updates when the battery level changes, missing potential safety issues at times two and four in \Cref{fig:timing_diagram:hold}.
As a complement to asynchronous accesses, \textsc{RTLola} features \emph{pacing annotations}.
The idea is that each output stream will be labeled with an annotation 
indicating when the stream is expected to evaluate.
Concretely, these annotations consists of (positive) boolean formulas over input streams.
Our running example can be annotated as follows:
\begin{lstlisting}
  input battery_lvl: Int
  input temperature: Int
  output drain @battery_lvl@ := 
      battery_lvl.prev(or: battery_lvl) - battery_lvl
  output warning @battery_lvl | temperature@ := 
      drain.hold(or: 0) < 0 && temperature.hold(or: 0) > 50
\end{lstlisting}

Here, the output stream \stream{drain} is annotated with \lstinline|@battery_lvl|, indicating that it is expected to evaluate each time a new \stream{battery\_lvl} input is received.
The case of the \stream{warning} stream is more interesting:
we annotated it with \lstinline!@battery_lvl | temperature!, indicating that
it should be updated whenever either the temperature \emph{or} the battery level changes.
We further updated the specification to exclusively employ \lstinline|hold| accesses.
As a result of these modifications, the updated specification now produces a \emph{meaningful} temperature warning at \emph{all} time points (see \Cref{fig:timing_diagram:hold_hold}).

\begin{figure}[H]
  \centering
  \includegraphics[width=.5\textwidth]{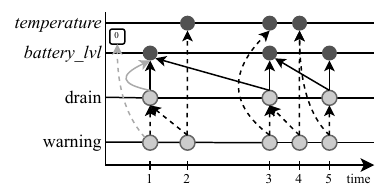}
    \caption{The effect of disjunctive pacing types}
    \label{fig:timing_diagram:hold_hold}
\end{figure}

We note that hold accesses like \lstinline|temperature.hold(or: 0)| are guaranteed to produce a value, even if the accessed stream \stream{temperature} has not updated in a very long time. As a result, over-using \lstinline|hold| can degrade the quality of verdicts by producing values that are not connected anymore to the current state of the monitored system.
In the previous example, the pacing annotation \lstinline!battery_lvl | temperature! ensures that the two \lstinline{hold} accesses produce values that are reasonably up-to-date with the current status of the system: a new warning value is only produced when one of the two monitored sensors updates.
Combining pacing annotations with asynchronous accesses in this manner enables balancing the frequency of updates and the interpretability of the calculated value

\subsection{The Problem}\label{sec:problem}

While pacing annotations offer a convenient way for users to finely
specify data-synchronization policies, they also enable users to write
\emph{inconsistent} specifications.
As an example, consider the specification in \Cref{fig:invalid:spec}.
\begin{figure}[H]
  \centering
  \begin{subfigure}{0.4\textwidth}
    \centering
    \begin{lstlisting}
      input a: Int
      input b: Int
      output x @b@ := b
      output y @a@ := x
    \end{lstlisting}
    \caption{An inconsistent specification}
    \label{fig:invalid:spec}
  \end{subfigure}
  \qquad
  \begin{subfigure}{0.4\textwidth}
      \centering
      \includegraphics[width=.7\textwidth]{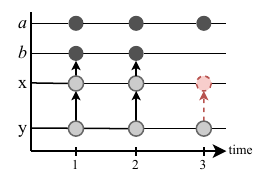}
      \caption{The timing problem visualized}
      \label{fig:invalid:timing}
  \end{subfigure}
  \caption{An example of inconsistent specification}
  \label{fig:invalid}
\end{figure}
The pacing annotation \lstinline|x @b| requires that \stream{x} evaluates whenever \stream{b} receives a new value. Similarly, the annotation \lstinline|y @a|
requires \stream{y} to update whenever \stream{a} receives a new value.
As demonstrated in \Cref{fig:invalid:timing}, these requirements cannot
always be met at runtime!
At time point three, the stream \stream{y} is expected to have a value (because \stream{a} also has a value).
Since \stream{y} contains a direct access to \stream{x}, it cannot evaluate (because \stream{x} does \emph{not} have a value at time point three).
The goal of the type system presented in this paper is to reject such invalid specifications, ensuring that a stream value \emph{can} always be computed when it \emph{must} be computed.

We note that one way to \enquote{repair} the specification in \Cref{fig:invalid:spec} would be to replace the synchronous access to \stream{x} in \stream{y} with a \lstinline|hold| access.
However, this change alters the semantics by removing the requirement that \stream{x} must produce a new value whenever \stream{y} produces a new value.
With the \lstinline|hold| operator, \stream{y} keeps producing fresh values (based on old values of \stream{x}) even if \stream{x} stops updating.

\subsection{A Type-driven Solution}

To prevent timing inconsistencies, \textsc{RTLola} uses a type system.
First, we observe that the asynchronous \lstinline|hold| stream access never causes timing inconsistencies, as it inherently provides a default value when the accessed value is absent.
Consequently, the type system focuses on synchronous stream accesses, such as direct or \lstinline|prev| accesses.
More precisely, the type system ensures that a synchronously accessed value is always available when needed at runtime.
To enforce this property, it requires that a stream accessed synchronously evaluates at least at the same time points as the stream containing the access.




  \section{Pacing Types}
\label{sec:pacing_types}

\newcommand{\Vin}{\mathbb{V}_\mathit{in}}
\newcommand{\Vout}{\mathbb{V}_\mathit{out}}
\newcommand{\syn}[1]{{\color{brown}\textbf{\texttt{#1}}}}
\newcommand{\stradd}[2]{#1 \ \syn{+} \ #2}
\newcommand{\streq}[2]{#1 \ \syn{==} \ #2}
\newcommand{\strand}[2]{#1 \ \syn{\&\&} \ #2}
\newcommand{\strnot}[1]{\syn{not} \ #1}
\newcommand{\strlt}[2]{#1 \ \syn{<} \ #2}
\newcommand{\strlast}[2]{#1 \syn{.\texttt{prev}(}#2 \syn{)}}
\newcommand{\strhold}[2]{#1 \syn{.\texttt{hold}(}#2 \syn{)}}
\newcommand{\strdef}[3]{#1 \ \syn{@} \ #2 \ \syn{:=} \ #3}
\newcommand{\strwhen}[4]{#1 \ \syn{@} \ #2 \ \syn{when} \ #3 \ \syn{:=} \ #4}
\newcommand{\rhoIn}{{\rho_\textit{in}}}
\newcommand{\rhoOut}{{\rho_\textit{out}}}
\newcommand{\last}[2]{\mathit{Last}(#1, #2)}
\newcommand{\sync}[2]{\mathit{Sync}(#1, #2)}
\newcommand{\prev}[3]{\mathit{Prev}(#1, #2, #3)}
\newcommand{\hold}[3]{\mathit{Hold}(#1, #2, #3)}
\newcommand{\den}[1]{\llbracket #1 \rrbracket}

To formalize pacing annotations and their properties,
we focus on a core language modeling the key features of \textsc{RTLola}, here dubbed \CLola (for micro \textsc{RTLola}).
In essence, \CLola programs represent a set of
constraints over input and output \emph{stream variables}
drawn from (disjoint) sets $\Vin$ and $\Vout$ respectively.
For a fixed set of (named) input streams, these
constraints describe which set of (named) output
streams are valid reactions to the inputs.
Because of this declarative view, we refer to \CLola programs as \emph{specifications}.
For clarity, we only formalize constraints over integer-valued streams, but \CLola can be extended to arbitrary data types in a straightforward manner\footnote{In fact, in our Rocq formalization, \CLola is parameterized by an arbitrary type of values equipped with arbitrary (total) operators.}.

\subsection{Syntax of \CLola}

Constraints on stream variables are expressed via a list of equations of the form
$\strdef{x}{\tau}{e}$. 
In these equations, $x \in \Vout$ is the name of an output stream being constrained, and $e$ is a \emph{stream expression} over input and output variables. Finally, $\tau$ is a positive boolean formula over input variables describing when the output stream $x$ must be evaluated depending on the availability of inputs.
We call $\tau$ a \emph{pacing annotation}.

Stream expressions can be constant values $v \in \mathbb{Z}$, stream variables $x \in \Vin \uplus \Vout$, arithmetic functions applied to sub-expressions, references to previous values $\strlast{x}{e}$, or \emph{hold} accesses $\strhold{x}{e}$.
In $\strlast{x}{e}$ and $\strhold{x}{e}$,
$x \in \Vin \uplus \Vout$ is the name of an arbitrary input or output stream, and $e$ is a \emph{default expression} representing what value the expression should take if there is no previous value for $x$ (in the case of \syn{prev}) or if there is no value to hold (in the case of \syn{hold}). The exact syntax of specifications, stream expressions, and pacing annotations is defined below.

\begin{center}
  \begin{tabular}{ll}
    \textit{Variables}          & \quad $x \in \mathbb{V}_\texttt{in} \uplus \mathbb{V}_\texttt{out}$\\
    \textit{Values}             & \quad $v \in \mathbb{Z}$\\
    \textit{Stream Expressions} & \quad $e \ ::= \ v \mid x \mid \strlast{x}{e} \mid \strhold{x}{e} \mid \stradd{e_1}{e_2} \mid \ldots$\\
    \textit{Pacing Annotations} & \quad $\tau \ ::= \ x \mid \syn{$\top$} \mid \tau_1 \ \syn{$\wedge$} \ \tau_2 \mid \tau_1 \ \syn{$\vee$} \ \tau_2 $\\
    \textit{Equations}          & \quad $eq ::= \ \strdef{x}{\tau}{e}$\\
    \textit{Specifications}     & \quad $S \ ::= \ \epsilon \mid eq \cdot S$
  \end{tabular}
\end{center}

\subsection{Semantics of \CLola}
\label{sec:semantics}

\headline{Streams}

We equip \CLola with a denotational semantics assigning to lists of equations $S$ a relation $\den{S}$ between input and output streams. Importantly, to account for the asynchronous arrival of inputs (and the asynchronous production of outputs), streams are defined as sequences of \emph{optional} values: at each point in time, a stream can either have a value $v \in \mathbb{Z}$, or be undefined denoted by $\bot$. Formally, the set of streams is defined as follows: \[
  \mathit{Stream} \triangleq (\mathbb{Z} \uplus \{ \bot \})^\mathbb{N}
\]

When working with streams of optional values, it is often useful to equate streams that are not exactly equal, but only equal at non-$\bot$ positions. Given two streams $w_1, w_2 \in \mathit{Stream}$, we define the relation $w_1 \preccurlyeq w_2$ (read ``$w_1$ \textit{is less defined than} $w_2$'') as
\[
  w_1 \preccurlyeq w_2 \iff \forall n. \ w_1(n) = \bot \vee w_1(n) = w_2(n)
\]

\noindent Intuitively, $w_1 \preccurlyeq w_2$ 
guarantees that $w_1$ and $w_2$ are equal at positions where $w_1$ is defined, but $w_2$ is unconstrained at positions where $w_1$ contains $\bot$.

\headline{Maps}

\CLola equations are interpreted with respect to \emph{stream maps}
associating each input and output variable to a stream.
We note $\mathit{InMap}$ and $\mathit{OutMap}$ for input and output stream maps. \[
  \mathit{InMap} \triangleq \mathit{Stream}^{\Vin} \qquad \mathit{OutMap} \triangleq \mathit{Stream}^{\Vout}
\]

When proving the soundness of the type system, it will also be useful to consider \emph{partial} output stream maps. We represent them as total maps to optional streams. \[
  \mathit{ParMap} \triangleq (\mathit{Stream} \cup \{ \ \bot \ \})^{\Vout}
\]

Given a partial map $\rho \in \mathit{ParMap}$, we note $\mathit{dom}(\rho) \subseteq \Vout$ for the domain of $\rho$ (i.e. variables $x$ such that $\rho(x) \ne \bot$), $\emptyset$ for the empty map sending all variables to $\bot$, and $(x \mapsto w)$ the singleton map sending $x$ to the stream $w$.
Finally, given a partial map $\rho_\mathit{par} \in \mathit{ParMap}$ and a total map 
$\rho_\mathit{tot} \in \mathit{OutMap}$, we define the \emph{totalization}
of $\rho_\mathit{par}$ with respect to $\rho_\mathit{tot}$ as follows \[
  \rho_\mathit{par} \cdot \rho_\mathit{tot} \triangleq \lambda x. \ \begin{cases}\begin{tabular}{ll}
    $\rho_\mathit{par}(x)$ &if $x \in \mathit{dom}(\rho_\mathit{par})$\\
    $\rho_\mathit{tot}(x)$ &otherwise
  \end{tabular}\end{cases}
\]

In the rest of the paper, $\rhoIn$ and $\rhoOut$ are variables ranging over $\mathit{InMap}$ and $\mathit{OutMap}$ respectively. We also use
the notation $\rhoOut$ for variables ranging over partial output maps.

\headline{Semantics of Stream Expressions}

Given a pair of input and output stream maps $(\rhoIn, \rhoOut)$, an expression $e$ denotes a stream $\den{e}_{(\rhoIn, \rhoOut)} \in \mathit{Stream}$.
To assign a meaning to stream operators, we first introduce the following functions $\mathit{Last}$, $\mathit{Prev}$, and $\mathit{Hold}$.
In all definitions below, $w \in \mathit{Stream}$ is a stream of optional values, $n \in \mathbb{N}$ is a time point and $v \in \mathbb{Z} \cup \{ \bot \}$ is an optional value.

\begin{definition}[Stream operators]
  \upshape
  \begin{align*}
    \last{w}{n} &\triangleq \begin{cases}
        \begin{tabular}{ll}
          $w(n)$ &if $\ w(n) \neq \bot$\\
          $\last{w}{n-1}$ &if $\ w(n) = \bot \land n > 0$\\
          $\bot$ &otherwise
        \end{tabular}
      \end{cases}\\
      \hold{w}{n}{v} &\triangleq \begin{cases}
        \begin{tabular}{ll}
          $\last{w}{n}$ &\quad \ \ if $\last{w}{n} \neq \bot$\\
          $v$ &\quad \ \ otherwise
        \end{tabular}
      \end{cases}\\
      \prev{w}{n}{v} &\triangleq \begin{cases}
        \begin{tabular}{ll}
          $\bot$ &if $w(n) = \bot$\\
          $\last{w}{n-1}$ &if $w(n) \neq \bot \land n > 0 \land \last{w}{n-1} \neq \bot$\\
          $v$ &otherwise\\
        \end{tabular}
      \end{cases}
  \end{align*}
\end{definition}

\noindent The intuition behind these functions can be described as follows: \begin{itemize}[leftmargin=*]
  \setlength\itemsep{.4em}
  \item $\last{w}{n}$ is the last defined value of $w$ before the time point $n$ (included), or $\bot$ if $w$ is never defined before $n$.
  \item $\hold{w}{n}{v}$ models \syn{hold} accesses.
  It returns the last value of $w$ before $n$ (included), or a default value $v$
  if $w$ is never defined before $n$.
  \item $\prev{w}{n}{v}$ models \syn{prev} accesses.
  Since \syn{prev} accesses are synchronous, $\mathit{Prev}$ returns $\bot$ if
  $w(n)$ is undefined.
  If $w(n)$ is defined, it returns the last defined value before $n$ (excluded), or a default value $v$ if $w$ is never defined before $n$.
\end{itemize}

\noindent Given a pair of stream maps $\rho = (\rhoIn, \rhoOut)$ and a stream variable $x \in \Vin \uplus \Vout$, we use the notation convention $\rho(x)$ to denote the input or output stream associated with $x$:
\[
  \rho(x) \triangleq \begin{cases}
    \begin{tabular}{ll}
      $\rhoIn(x)$ &if $x \in \Vin$\\
      $\rhoOut(x)$ &if $x \in \Vout$
    \end{tabular}
  \end{cases}
\]
Using these definitions, we can introduce the denotation of stream expressions.

\begin{definition}[Denotation of stream expressions]
  \upshape
  \begin{align*}
    \den{c}_\rho(n) &\triangleq c\\
    \den{\stradd{e_1}{e_2}}_\rho(n) &\triangleq \den{e_1}_\rho(n) +_{\bot} \den{e_2}_\rho(n)\\
    \den{x}_\rho(n) &\triangleq \rho(x)(n)\\
    \den{\strlast{x}{e}}_\rho(n) &\triangleq \prev{\rho(x)}{n}{\den{e}_\rho(n)}\\
    \den{\strhold{x}{e}}_\rho(n) &\triangleq \hold{\rho(x)}{n}{\den{e}_\rho(n)}
  \end{align*}
  Where the operator $+_{\bot}$ is defined as follows:\[
  v_1 +_{\bot} v_2 \triangleq \begin{cases}
    \begin{tabular}{ll}
      $v_1 + v_2$ &if $v_1 \neq \bot \land v_2 \neq \bot$\\
      $\bot$ &otherwise
    \end{tabular}
  \end{cases}
\]
\end{definition}

\headline{Semantics of Pacing Annotations}

Given an input map $\rhoIn$, a pacing annotation $\tau$ is
interpreted as a set of time points $\den{\tau}_\rhoIn \subseteq \mathbb{N}$.
Concretely $\den{\tau}_\rhoIn$ represents all of the time points where the inputs described by the boolean formula $\tau$ are defined in $\rhoIn$.

\begin{definition}[Denotation of pacing annotations]
  \upshape
  \begin{align*}
    \den{x}_\rhoIn &\triangleq \{ \ n \mid \rhoIn(x)(n) \ne \bot \ \}\\
    \den{\syn{$\top$}}^n_\rhoIn &\triangleq \mathbb{N}\\
    \den{\tau_1 \ \syn{$\vee$} \ \tau_2}_\rhoIn &\triangleq \den{\tau_1}_\rhoIn \cup \den{\tau_2}_\rhoIn\\
    \den{\tau_1 \ \syn{$\wedge$} \ \tau_2}_\rhoIn &\triangleq \den{\tau_1}_\rhoIn \cap \den{\tau_2}_\rhoIn
  \end{align*}
\end{definition}

Pacing annotation $\tau$ can also be understood as describing the set of streams which have defined values exactly at time points $\den{\tau}_\rhoIn$.
We write $\den{\tau}^\omega_\rhoIn \subseteq \mathit{Stream}$ for the set of such ``$\tau$-paced'' streams.

\begin{definition}[Well-paced streams]
  \begin{align*}
    \den{\tau}^\omega_\rhoIn \triangleq \{ \ w \mid \forall n. \ n \in \den{\tau}_\rhoIn \iff w(n) \ne \bot \ \}
  \end{align*}
\end{definition}

\headline{Semantics of Specifications}

Having defined the semantics of expressions and pacing annotations, we can define the semantics of equations and specifications. An equation $\strdef{x}{\tau}{e}$ denotes a binary relation between input and output maps.
A specification (i.e., a list of equations) is interpreted as the intersection of the solutions of its equations.

\begin{definition}[Denotation of equations]
  \upshape
  \begin{align*}
    \den{\strdef{x}{\tau}{e}} &\triangleq \{ \ (\rhoIn, \rhoOut) \mid \rhoOut(x) \in \den{\tau}^\omega_\rhoIn \wedge \rhoOut(x) \preccurlyeq \den{e}_{(\rhoIn, \rhoOut)} \ \}\\
    \den{ \mathit{eq}_1 \cdot \ldots \cdot \mathit{eq}_n } &\triangleq \den{\mathit{eq}_1} \cap \ldots \cap \den{\mathit{eq}_n}
  \end{align*}
\end{definition}

A pair of stream maps $(\rhoIn, \rhoOut)$ satisfies a single equation $\strdef{x}{\tau}{e}$ 
under two conditions. First, the stream $x$ 
should be well-paced w.r.t. $\tau$ (i.e.,$\rhoOut(x) \in \den{\tau}^\omega_\rhoIn$).
Second, $x$ should be equal to its definition $e$.
We note that the pacing annotation can force $x$ to evaluate to $\bot$ at certain time point, even though $e$ would have had a defined value. Therefore, we only ask $x$ to be less defined than $e$ rather than strictly equal (i.e., $\rhoOut(x) \preccurlyeq \den{e}_{(\rhoIn, \rhoOut)}$).

\headline{Consistency}

Equations might have no solutions, or only admit solutions for specific combination of inputs.
As discussed in the overview, this issue is worsened by the addition of pacing annotations.
Indeed, in the semantics of equations, the additional pacing constraint ($\rhoOut(x) \in \den{\tau}^\omega_\rhoIn$) can turn solvable equations into unsolvable ones.
For example, assuming $a$ and $b$ are two input variables, the equations \begin{align*}
  &\strdef{x}{a}{a}\\
  &\strdef{y}{b}{x}
\end{align*}

\noindent are unsolvable for $\rhoIn$'s where $a$ and $b$ are not perfectly synchronized,
but would always have a trivial solution if pacing annotations were ignored (just set $\rhoOut(y) = \rhoOut(x) = \rhoIn(a)$).
We wish to ensure that annotated equations have solutions for \emph{all} possible inputs.
Equations that satisfy this property are called \emph{consistent}.

\begin{definition}[Consistency]
  A specification $S$ is called \emph{consistent} if, for any input map $\rhoIn$, there exists at
  least one compatible output map $\rhoOut$. Formally \[
    \mathit{Consistent}(S) \triangleq \forall \rhoIn. \ \exists \rhoOut. \ (\rhoIn, \rhoOut) \in \den{S}
  \]
\end{definition}

\subsection{Type System for Pacing Annotations}

To determine whether a given specification is consistent, we use a type system.
Our type system uses typing judgements of the following two forms \[
  \Gamma \vdash e : \tau \quad \textrm{and} \quad \Gamma \vdash S
\]

\noindent where $\Gamma$ is a \emph{typing context}, $e$ is a stream expression, $\tau$ is a pacing annotation, and $S$ is a list of annotated equations.
The first kind of judgement, $\Gamma \vdash e : \tau$, is used to check the pacing of stream expressions $e$.
Judgements $\Gamma \vdash S$ are used to check the consistency of sets of equations.
In both judgements, the typing context $\Gamma$ is a partial map from output variables $x \in \Vout$ to annotations $\tau$, representing assumptions made on the pacing of some output streams.
In the following, we use the notation convention $\Gamma, x : \tau$
to denote a context where $x$ maps to $\tau$, and we note $(x : \tau) \in \Gamma$ when $\Gamma(x) = \tau \ne \bot$.



\begin{figure}[ht!]
  \begin{mathpar}
    \textbf{Equations}\\
    \inferrule[Empty]{ }{\Gamma \vdash \epsilon} \and 
    \inferrule[Equation]{ x \notin \mathit{dom}(\Gamma)\\\Gamma \vdash e : \tau\\ \Gamma, x : \tau \vdash S }{\Gamma \vdash (\strdef{x}{\tau}{e}) \cdot S }\\
    \textbf{Expressions}\\
    \inferrule[Const]{v \in \mathbb{Z}}{\Gamma \vdash v : \tau}\and
    \inferrule[BinOp]{\Gamma \vdash e_1 : \tau_\mathit{must} \\ \Gamma \vdash e_2 : \tau_\mathit{must} }{\Gamma \vdash \stradd{e_1}{e_2} : \tau_\mathit{must}}\\
    \inferrule[DirectOut]{(x : \tau_\mathit{can}) \in \Gamma\\ \tau_\mathit{must} \models \tau_\mathit{can}}{\Gamma \vdash x : \tau_\mathit{must}}\and
    \inferrule[DirectIn]{x \in \mathbb{V}_\texttt{in} \\ \tau_\mathit{must} \models x }{\Gamma \vdash x : \tau_\mathit{must}}\\
    \inferrule[PrevOut]{(x : \tau_\mathit{can}) \in \Gamma \\ \Gamma \vdash e : \tau_\mathit{must} \\ \tau_\mathit{must} \models \tau_\mathit{can}\\ }{\Gamma \vdash \strlast{x}{e} : \tau_\mathit{must}}\and
    \inferrule[PrevIn]{x \in \Vin \\ \Gamma \vdash e : \tau_\mathit{must} \\ \tau_\mathit{must} \models x\\ }{\Gamma \vdash \strlast{x}{e} : \tau_\mathit{must}}\\
    \inferrule[HoldOut]{x \in \mathit{dom}(\Gamma)\\ \Gamma \vdash e : \tau_\mathit{must}}{\Gamma \vdash \strhold{x}{e} : \tau_\mathit{must}}\and
    \inferrule[HoldIn]{x \in \Vin \\ \Gamma \vdash e : \tau_\mathit{must} }{\Gamma \vdash \strhold{x}{e} : \tau_\mathit{must}}
  \end{mathpar}
  \caption{The \CLola typing rules for pacing types}
  \label{fig:typing-v1}
\end{figure}

\headline{Typing Equations}

The typing rules for equations are processing the stream equations in a specification $S$ in order of their appearance, extending the typing context after each processed equation.
The \textsc{Empty} rule states that an empty list of equations is consistent in any context.
The \textsc{Equation} rule processes the first equation of $\strdef{x}{\tau}{e}$ of a list.
For such an equation to be well-typed, the $x$ should be unbound in $\Gamma$ (i.e., $x \notin \mathit{dom}(\Gamma)$), making sure that there is at most one stream equation
per output variables in the specifications.
Further, we require the stream expression $e$ to be well-typed in $\Gamma$ for the annotated pacing (i.e., $\Gamma \vdash e : \tau$).
Finally, the remaining equations should be well-typed in a context where $x$ is assumed to be $\tau$-paced.

\headline{Typing Expressions}

The more interesting typing rules arise from stream accesses.
Each kind of stream access comes associated with two typing rules depending on whether the accessed variable is an input or an output.
The rules for synchronous accesses (i.e., \textsc{DirectOut}, \textsc{DirectIn}, \textsc{PrevOut}, and \textsc{PrevIn}) check that the time points at which the expression \emph{must} be evaluated according to user annotations ($\tau_\mathit{must}$) are subsumed by the time points at which the accessed stream \emph{can} be evaluated according to the current typing context ($\tau_\mathit{can}$).
This is made explicit by side conditions $\tau_\mathit{must} \models \tau_\mathit{can}$, formally defined as follows:
\[
  \tau_\mathit{must} \models \tau_\mathit{can} \triangleq \forall \rhoIn. \ \den{\tau_\mathit{must}}_\rhoIn \subseteq \den{\tau_\mathit{can}}_\rhoIn
\]
In all rules, $\tau_\mathit{must}$ corresponds to the pacing type that the target expression is expected to have.
If the accessed stream is an output stream, $\tau_\mathit{can}$ is determined from the typing context.
For accesses to input streams, $\tau_\mathit{can}$ is the stream name itself.
An important remark is that checking $\tau_\mathit{must} \models \tau_\mathit{can}$ can be done using a SAT-solver.
Indeed, since pacing annotations are positive boolean formulas, it suffices to check the validity of the boolean implication $\tau_\mathit{must} \to \tau_\mathit{can}$.

Since \syn{hold} and \syn{prev} accesses can depend on the value of a default expression $e$, their associated typing rules require the default expression to be well-paced. For example, for $x\syn{.prev}(e)$ to be $\tau_\mathit{must}$ paced,
$e$ should also be $\tau_\mathit{must}$ paced.
We note that, contrary to synchronous accesses, \syn{hold} accesses do not impose pacing restrictions on their accessed variables:
hold accesses $x\syn{.hold}(e)$ are $\tau$-paced for any $\tau$ (provided the default expression $e$ is also $\tau$-paced).
This coincides with the semantics of \syn{hold}, which is guaranteed to produce a defined value by either returning the last defined value of the accessed stream, or a default value.

\section{Examples}

We demonstrate how to use our type system by applying it to a simple example.
In the following, we suppose that inputs stream variables are $\Vin = \{ \ a, b \ \}$, and output stream variables are $\Vout = \{ x, y \}$.
In Section~\ref{sec:semantics}, we observed that the following two equations are inconsistent (because $y$ is forced to have a defined value whenever $b$ also does, even though $x$ might be undefined):
\begin{align*}
  &\strdef{x}{a}{a}\\
  &\strdef{y}{b}{x}
\end{align*}

For the purpose of the demonstration, let us try to check $\emptyset \vdash (\strdef{x}{a}{a}) \cdot (\strdef{x}{b}{x})$.
Using the rule \textsc{Equation} two times (one for each equation), we ultimately have to type-check two expressions in two different contexts: \[
  \vdash a : a \quad \textrm{and} \quad x : a \vdash x : b
\]
Since $a$ is an input variable and $a \models a$, the
first judgement is trivially handled using \textsc{DirectIn}.
However, the second judgement $x : a \vdash x : b$ cannot be handled.
Since $x$ is an output variable, the only rule we could use is \textsc{DirectOut}, but we cannot use it since $a \not\models b$. The (unsuccessful) partial derivation is given below.

\begin{mathpar}
  \inferrule*[left=\textsc{Equation}]{
    \inferrule*[left=\textsc{DirectIn}]{a \models a}{\emptyset \vdash a : a}
    \\
     \inferrule*[left=Equation]{
      \inferrule*[]{\color{red}\textit{impossible} \ (a \not\models b)}{x : a \vdash x : b}\\
      \ldots
    }{x : a \vdash \strdef{x}{b}{x}}
  }{
    \emptyset \vdash (\strdef{x}{a}{a}) \cdot (\strdef{x}{b}{x})
  }
\end{mathpar}

This inconsistency can be fixed, for example, by replacing $x$ with $\strhold{x}{b}$ in the second equation.
In this case the stream $y$ would still be guaranteed to have a value whenever a fresh input $b$ is received, but this value will only be an approximation of the stream $x$.
The full typing derivation is given below.

\begin{mathpar}
  \inferrule*[left=\textsc{Equation}]{
    \inferrule*[left=\textsc{DirectIn}]{a \models a}{\emptyset \vdash a : a}
    \\
     \inferrule*[left=Equation]{\ldots}{x : a \vdash \strdef{x}{b}{\strhold{x}{b}}}
  }{
    \emptyset \vdash (\strdef{x}{a}{a}) \cdot (\strdef{x}{b}{\strhold{x}{b}})
  }\\
  \\
  \inferrule*[left=]{
      \inferrule*[left=\textsc{HoldOut}]{
        x \in \{ \ x \ \}\\
        \inferrule*[left=\textsc{DirectIn}]{ b \models b }{x : a\vdash b : b}
      }{
        x : a \vdash \strhold{x}{b} : b
      }\\
      \inferrule*[right=\textsc{Empty}]{ }{x : a, y : b \vdash \epsilon}
    }{
      \ldots
    }
\end{mathpar}

We note that this time, we were able to use \textsc{HoldOut} to justify that the second equation $\strdef{y}{b}{\strhold{x}{b}}$ has a well-paced solution for all inputs.
We only needed to check that the default value 
in the access $\strhold{x}{b}$ (i.e., $b$) is defined at time points where $y$ must be defined.
Since the pacing annotation of the stream $y$ is $b$, this condition is trivially true ($b \models b$).
  \section{Soundness Proof}
\label{sec:proof}

Having defined our type system, we still need to formally prove
its soundness.
Namely, if a specification is well-typed according to the rules presented in the previous section, is should be \emph{consistent}.
We prove soundness by constructing an adequate \emph{semantic model} of our typing judgements.

\subsection{Semantic Model}

\newcommand{\Lsafe}[1]{\mathcal{S}\llbracket #1 \rrbracket}
\newcommand{\Lctx}[1]{\mathcal{G}\llbracket #1 \rrbracket}
\newcommand{\Lstr}[1]{\mathcal{W}\llbracket #1 \rrbracket}
\newcommand{\Lequ}[1]{\mathcal{E}\llbracket #1 \rrbracket}

\headline{Interpretation of Typing Contexts}

To construct a semantic interpretation of typing judgements $\Gamma \vdash e : \tau$ and $\Gamma \vdash S$, the first thing we need is to interpret the context $\Gamma$.
Given an input map $\rhoIn$, a typing context $\Gamma$ can be understood as describing partial stream maps whose streams are paced as described by the bindings $(x : \tau) \in \Gamma$.
We denote this interpretation by $\den{\Gamma}_\rhoIn \subseteq \mathit{Pmap}$.

\begin{definition}[Interpretation of typing contexts]
  \begin{align*}
    \den{\Gamma}_{\rhoIn} &\triangleq \{ \ \rhoOut \mid \mathit{dom}(\Gamma) = \mathit{dom}(\rhoOut) \wedge \forall (x : \tau) \in \Gamma. \ \rhoOut(x) \in \den{\tau}^\omega_{\rho_\texttt{in}} \ \}
  \end{align*}
\end{definition}

\headline{Semantic Typing of Expressions}

Intuitively, if $\Gamma \vdash e : \tau$, then for any input map $\rhoIn$, and for any output map that is compatible with $\Gamma$, $e$ should have a defined value at least at all time points that are described by $\tau$. This semantic interpretation of typing judgements for expressions is denoted by $\Gamma \vDash e : \tau$.

\newcommand{\pden}[1]{\overline{\den{#1}}}

\begin{definition}[Semantic typing of expressions]
  \begin{align*}
    \Gamma \vDash e : \tau \triangleq \forall \rhoIn. \forall \rhoOut \in \den{\Gamma}_{\rhoIn}. \forall n \in \den{\tau}_\rhoIn. \ \pden{e}_{(\rhoIn, \rhoOut)}(n) \ne \bot
  \end{align*}
\end{definition}

We note that in this definition, output maps $\rhoOut$ are only \emph{partial maps}, so we cannot use the standard denotation of expressions $\den{e}$ to evaluate $e$ in $\rhoOut$.
Instead, we use a slightly different \emph{partial semantics} $\pden{e}$, returning $\bot$ whenever $e$ contains an access to an output variable $x \notin \mathit{dom}(\rhoOut)$.
We omit the full formal definition of $\pden{e}$ here, but we note that this partial semantics \emph{agrees} with the reference semantics in the sense of the following lemma.

\begin{lemma}
  \label[plemma]{lem:pden-den}
  For any $\rhoIn \in \mathit{InMap}$, any $\rho^1_\mathit{out} \in \mathit{Pmap}$, and any $\rho^2_\mathit{out} \in \mathit{OutMap}$, \[
    \pden{e}_{(\rhoIn, \rho^1_\mathit{out})} \preccurlyeq \den{e}_{(\rhoIn, \rho^1_\mathit{out} \cdot \rho^2_\mathit{out})}
  \]
\end{lemma}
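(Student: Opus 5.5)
Since the paper omits the definition of $\pden{e}$, I will fix it as follows. It is defined exactly like $\den{e}$, except for variable accesses to an output $x$: there, $\pden{x}_{(\rhoIn,\rhoOut)}(n) \triangleq \rhoOut(x)(n)$ if $x \in \mathit{dom}(\rhoOut)$ and $\bot$ otherwise. The same case split applies to $\strlast{x}{e}$ and $\strhold{x}{e}$: if $x$ is an output outside the domain, the whole access evaluates to $\bot$; otherwise $\mathit{Prev}$ or $\mathit{Hold}$ is applied to $\rhoOut(x)$ and to the partial denotation of the default expression.

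The proof is a structural induction on $e$, with $\rhoIn$, $\rho^1_\mathit{out}$ and $\rho^2_\mathit{out}$ fixed. We prove pointwise, for every $n$, that either $\pden{e}(n)=\bot$ or $\pden{e}(n)=\den{e}(n)$, where the right-hand side is evaluated in $(\rhoIn, \rho^1_\mathit{out}\cdot\rho^2_\mathit{out})$. The basic observation is that for any variable $x$ that is an input or lies in $\mathit{dom}(\rho^1_\mathit{out})$, the stream seen by the partial semantics coincides exactly with $(\rhoIn,\rho^1_\mathit{out}\cdot\rho^2_\mathit{out})(x)$. This holds by the definition of totalization.

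\textbf{Constant and variable cases.} Constants are identical on both sides. For a variable, there are two sub-cases. If $x$ is an input or $x \in \mathit{dom}(\rho^1_\mathit{out})$, the two streams are equal. Otherwise the left side is $\bot$ everywhere, so the relation holds trivially.

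\textbf{Binary operators.} Suppose $\pden{e_1 \syn{+} e_2}(n) \neq \bot$. Then both $\pden{e_1}(n)$ and $\pden{e_2}(n)$ are defined. By the induction hypotheses they equal $\den{e_1}(n)$ and $\den{e_2}(n)$, so the two sums agree. This is monotonicity of $+_\bot$ with respect to $\preccurlyeq$.

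\textbf{prev and hold.} If the accessed $x$ is an output not in the domain, the left side is $\bot$ and we are done. Otherwise both sides apply $\mathit{Prev}$ (respectively $\mathit{Hold}$) to the same stream $w$ with defaults $d_1=\pden{e}(n)$ and $d_2=\den{e}(n)$, where the induction hypothesis gives $d_1=\bot$ or $d_1=d_2$.
\begin{itemize}
\item The result of $\mathit{Prev}(w,n,\cdot)$ or $\mathit{Hold}(w,n,\cdot)$ either does not depend on the default at all (the $\bot$ or $\mathit{Last}$ branches), or equals the default.
\item Since $w$ is the same on both sides, the same branch is taken on both sides.
\item In the default branch, a defined left value is $d_1$, which must therefore equal $d_2$.
\end{itemize}
So it suffices to establish the small lemma that $\mathit{Prev}$ and $\mathit{Hold}$ are monotone in the default argument with respect to ``$\bot$ or equal''. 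This follows by unfolding their definitions, with no induction on $n$ needed because $w$ is fixed.

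\textbf{Main obstacle.} I expect the only delicate point to be the bookkeeping: making precise the fact that the partial and total environments agree on every variable the partial semantics does not send to $\bot$, and handling the default branches of $\mathit{Prev}$ and $\mathit{Hold}$ carefully. The rest is routine case analysis.
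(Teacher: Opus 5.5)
Your proposal is correct. The paper omits both the definition of the partial semantics and the proof of this lemma, deferring to the Rocq development. Your choice of definition, returning $\bot$ at accesses to outputs outside $\mathit{dom}(\rho^1_\mathit{out})$, matches the paper's informal description, and your structural induction on $e$ is the intended argument. Its key observation is that the branch taken by $\mathit{Prev}$ and $\mathit{Hold}$ depends only on the accessed stream, which is identical on both sides, so only the default branch relies on the induction hypothesis.

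Your definition is slightly more permissive than a literal reading of the paper's phrase. For example, a \texttt{hold} on an in-domain stream whose default accesses an out-of-domain output can still return the held value. As your case analysis shows, the lemma holds either way, and this choice is also harmless for the soundness of expression typing.
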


\noindent In other words, if $e$ has defined values under the partial output map $\rho^1_\mathit{out}$, then it has the same defined values under any totalization $\rho^1_\mathit{out} \cdot \rho^2_\mathit{out}$.

\headline{Semantic Typing of Equations}

As for expressions, we construct a semantic interpretation of typing judgements $\Gamma \vdash S$ for equations.

\begin{definition}[Semantic typing of equations]
  \begin{align*}
    \Gamma \vDash S \triangleq \forall \rhoIn. \forall \rho^1_\mathit{out} \in \den{\Gamma}_\rhoIn. \exists \rho^2_\mathit{out} \in \mathit{OutMap}. \ \den{S}_{(\rhoIn, \rho^1_\mathit{out} \cdot \rho^2_\mathit{out})}
  \end{align*}
\end{definition}

Intuitively, for any partial output map $\rho^1_\mathit{out}$ well-paced according to $\Gamma$, the equations in $S$ are expected to have a well-paced solution $\rho^2_\mathit{out}$.
As a direct consequence of this definition, if a set of equations $S$ is \emph{semantically} well-paced in an empty context, then it is consistent (adequacy of the semantic model).

\begin{theorem}[Adequacy]
  \label{thm:adeq}
  $\emptyset \vDash S \implies \mathit{Consistent}(S)$
\end{theorem}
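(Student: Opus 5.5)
The proof unfolds the definitions and instantiates $\emptyset \vDash S$ with the empty partial map. Fix an arbitrary input map $\rhoIn$. We must exhibit some $\rhoOut \in \mathit{OutMap}$ with $(\rhoIn, \rhoOut) \in \den{S}$.

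First we check that the empty partial map $\emptyset$ (sending every output variable to $\bot$) belongs to $\den{\emptyset}_\rhoIn$. Its domain is empty, which equals $\mathit{dom}(\emptyset)$ for the empty context. The quantification over bindings $(x : \tau) \in \emptyset$ is vacuous, so the pacing condition holds trivially.

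Next we apply the hypothesis $\emptyset \vDash S$ with this $\rhoIn$ and $\rho^1_\mathit{out} = \emptyset$. This yields a total map $\rho^2_\mathit{out} \in \mathit{OutMap}$ such that $\den{S}$ holds for $(\rhoIn, \emptyset \cdot \rho^2_\mathit{out})$. By the definition of totalization, since $\mathit{dom}(\emptyset)$ is empty, we have $(\emptyset \cdot \rho^2_\mathit{out})(x) = \rho^2_\mathit{out}(x)$ for every $x$. Hence $\emptyset \cdot \rho^2_\mathit{out} = \rho^2_\mathit{out}$ by functional extensionality. Choosing $\rhoOut := \rho^2_\mathit{out}$ then gives $(\rhoIn, \rhoOut) \in \den{S}$, which is exactly $\mathit{Consistent}(S)$ for this $\rhoIn$.

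There is no real obstacle here. The only point requiring care is the identification $\emptyset \cdot \rho^2_\mathit{out} = \rho^2_\mathit{out}$, which needs either pointwise extensional equality of maps or a remark that $\den{S}$ depends on $\rhoOut$ only pointwise. We also read the notation $\den{S}_{(\rhoIn,\rhoOut)}$ in the semantic judgement as membership $(\rhoIn,\rhoOut) \in \den{S}$.
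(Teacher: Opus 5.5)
Your proof is correct and is the paper's argument: instantiate $\emptyset \vDash S$ with the empty partial map, which lies in $\den{\emptyset}_\rhoIn$ vacuously, and take the resulting $\rho^2_\mathit{out}$ as the witness. Your remark that $\emptyset \cdot \rho^2_\mathit{out} = \rho^2_\mathit{out}$ spells out a step the paper leaves implicit.
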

\begin{proof}
  Suppose $\emptyset \vDash S$ and let $\rhoIn \in \mathit{InMap}$.
  Pick $\rho^1_\mathit{out} = (\lambda \_. \ \bot) \in \den{\emptyset}_\rhoIn$, and by definition of $\emptyset \vDash S$ there must exists a solution $\rho^2_\mathit{out}$ to the equations $S$. \qed
\end{proof}

\subsection{Soundness of Typing Rules}

Now that we have established the \emph{adequacy} of our semantic model, we can prove that our typing rules are correct with respect to semantic typing.

\begin{theorem}[Sound typing of expressions]
  \label{thm:sound-expr}
  $\Gamma \vdash e : \tau \implies \Gamma \vDash e : \tau$
\end{theorem}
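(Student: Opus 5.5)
We prove the statement by induction on the derivation of $\Gamma \vdash e : \tau$. We fix an arbitrary $\rhoIn$, a partial map $\rhoOut \in \den{\Gamma}_\rhoIn$ and a time point $n \in \den{\tau}_\rhoIn$, and show $\pden{e}_{(\rhoIn,\rhoOut)}(n) \ne \bot$. We take the partial semantics $\pden{\cdot}$ to follow the same clauses as $\den{\cdot}$, except that an access to an output variable $x \notin \mathit{dom}(\rhoOut)$ yields $\bot$ everywhere. Otherwise it uses $\rhoOut(x)$, and in the case of $\mathit{Hold}$ it simply returns the default when the variable is unbound.

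The leaf cases come first. For \textsc{Const}, the value is $c \ne \bot$. For \textsc{DirectIn}, the premise $\tau \models x$ gives $n \in \den{x}_\rhoIn$, i.e.\ $\rhoIn(x)(n) \ne \bot$. For \textsc{DirectOut}, we have $(x:\tau_\mathit{can}) \in \Gamma$, and $\rhoOut \in \den{\Gamma}_\rhoIn$ gives both $x \in \mathit{dom}(\rhoOut)$ and $\rhoOut(x) \in \den{\tau_\mathit{can}}^\omega_\rhoIn$. The premise $\tau \models \tau_\mathit{can}$ yields $n \in \den{\tau_\mathit{can}}_\rhoIn$. The well-pacedness biconditional then gives $\rhoOut(x)(n) \ne \bot$. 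Here it is essential that the entailment is quantified over all $\rhoIn$, so that it can be instantiated at our fixed input map.

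The inductive cases follow. For \textsc{BinOp}, both induction hypotheses apply at the same $\rhoIn,\rhoOut,n$, and $+_\bot$ of two defined values is defined. For \textsc{PrevIn} and \textsc{PrevOut}, we first establish exactly as in the direct cases that the accessed stream $w$ satisfies $w(n) \ne \bot$, so the first clause of $\mathit{Prev}$ does not fire. The result is then either $\last{w}{n-1}$, which is defined in that branch by its guard, or the default $\pden{e'}(n)$, which is defined by the induction hypothesis for the default expression. For \textsc{HoldIn} and \textsc{HoldOut}, $\mathit{Hold}$ returns either $\last{w}{n} \ne \bot$ (by the guard) or the default value, which is again non-$\bot$ by the induction hypothesis. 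In \textsc{HoldOut} we use $x \in \mathit{dom}(\Gamma) = \mathit{dom}(\rhoOut)$ to know that $w = \rhoOut(x)$ is an actual stream rather than $\bot$.

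The proof is essentially routine case analysis. The main point of care is that the argument depends on the precise clauses of the partial semantics $\pden{\cdot}$, which the paper omits. In particular, we must check that unbound-variable behaviour never arises in any case, which the $\mathit{dom}$ equality in $\den{\Gamma}_\rhoIn$ guarantees. We must also check that the $\mathit{Prev}$ and $\mathit{Hold}$ case splits are exhaustive, with every non-default branch guarded by definedness. No appeal to \Cref{lem:pden-den} is needed for this theorem; it will be used later when lifting to equations.
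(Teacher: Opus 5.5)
Your proof is correct and matches the paper's argument: a routine induction on the derivation of $\Gamma \vdash e : \tau$. The synchronous cases use $\tau_\mathit{must} \models \tau_\mathit{can}$ instantiated at the fixed $\rhoIn$ together with the pacing and domain information in $\den{\Gamma}_\rhoIn$, and the \syn{prev}/\syn{hold} defaults are handled by the induction hypothesis. One caveat: you stipulate that the partial $\mathit{Hold}$ returns the default on an unbound variable, which contradicts the paper's description (any access to $x \notin \mathit{dom}(\rhoOut)$ yields $\bot$) and would in fact falsify Lemma~\ref{lem:pden-den}; however, your \textsc{HoldOut} case relies only on $x \in \mathit{dom}(\rhoOut)$, so the argument for this theorem is unaffected.
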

\begin{proof}
  The proof proceeds by a straightforward induction on the derivation $\Gamma \vdash e : \tau$.
  We refer the curious readers to our Rocq development for details.
  \qed
\end{proof}

\begin{theorem}[Sound typing of equations]
  \label{thm:sound-equ}
  $\Gamma \vdash S \implies \Gamma \vDash S$
\end{theorem}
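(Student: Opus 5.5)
We prove the statement by induction on the derivation of $\Gamma \vdash S$, with $\Gamma$ generalized. In the \textsc{Empty} case, $\den{\epsilon}$ imposes no constraint, so for any $\rhoIn$ and $\rho^1_\mathit{out} \in \den{\Gamma}_\rhoIn$ we may take $\rho^2_\mathit{out}$ arbitrary, say constantly the all-$\bot$ stream.

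The \textsc{Equation} case carries all the content. We have $x \notin \mathit{dom}(\Gamma)$, $\Gamma \vdash e : \tau$, and (by the induction hypothesis) $\Gamma, x : \tau \vDash S$. Fix $\rhoIn$ and $\rho^1_\mathit{out} \in \den{\Gamma}_\rhoIn$.

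First, construct the stream for $x$. Put
\[
w(n) \triangleq \begin{cases} \pden{e}_{(\rhoIn,\rho^1_\mathit{out})}(n) & \text{if } n \in \den{\tau}_\rhoIn,\\ \bot & \text{otherwise.}\end{cases}
\]
Theorem~\ref{thm:sound-expr} gives $\Gamma \vDash e : \tau$, so $w(n) \ne \bot$ exactly on $\den{\tau}_\rhoIn$. Hence $w \in \den{\tau}^\omega_\rhoIn$. The key point is that $e$ is evaluated only in the partial map $\rho^1_\mathit{out}$, which does not contain $x$. This avoids any circularity even if $e$ mentions $x$: such an access is $\bot$ in the partial semantics, and the typing rules only allow it through \textsc{Hold}, which the semantic typing already handles.

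Next, extend the partial map. Let $\rho' \triangleq \rho^1_\mathit{out}[x \mapsto w]$. Since $x \notin \mathit{dom}(\Gamma) = \mathit{dom}(\rho^1_\mathit{out})$ and $w$ is $\tau$-paced, we get $\rho' \in \den{\Gamma, x:\tau}_\rhoIn$. The induction hypothesis then yields $\rho^2_\mathit{out}$ with $\den{S}_{(\rhoIn, \rho' \cdot \rho^2_\mathit{out})}$. Observe that $\rho' \cdot \rho^2_\mathit{out} = \rho^1_\mathit{out} \cdot \rho^3_\mathit{out}$, where $\rho^3_\mathit{out} \triangleq \rho^2_\mathit{out}[x \mapsto w]$, because $x \notin \mathit{dom}(\rho^1_\mathit{out})$. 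We take $\rho^3_\mathit{out}$ as the witness. The tail $S$ is then satisfied, and it remains to check the head equation in the total map $\rho \triangleq \rho^1_\mathit{out}\cdot\rho^3_\mathit{out}$:
\begin{itemize}
\item \emph{Pacing.} We have $\rho(x) = w \in \den{\tau}^\omega_\rhoIn$.
\item \emph{Definition.} We need $w \preccurlyeq \den{e}_{(\rhoIn,\rho)}$. By construction $w \preccurlyeq \pden{e}_{(\rhoIn,\rho^1_\mathit{out})}$. Lemma~\ref{lem:pden-den}, applied with totalization $\rho^3_\mathit{out}$, gives $\pden{e}_{(\rhoIn,\rho^1_\mathit{out})} \preccurlyeq \den{e}_{(\rhoIn,\rho)}$. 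Transitivity of $\preccurlyeq$ finishes the argument.
\end{itemize}

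The main obstacle is the bookkeeping between partial and total maps. We must verify that the totalization identity $\rho' \cdot \rho^2_\mathit{out} = \rho^1_\mathit{out}\cdot \rho^3_\mathit{out}$ holds pointwise, which follows by case analysis on whether a variable is $x$, lies in $\mathit{dom}(\Gamma)$, or neither. We must also check that $\preccurlyeq$ is transitive and that "less defined than" is compatible with the truncation to $\den{\tau}_\rhoIn$. The conceptual difficulty, that $e$ might refer to $x$ itself or to later streams, is neutralized by evaluating $e$ in the partial semantics together with Lemma~\ref{lem:pden-den}. Combined with Theorem~\ref{thm:adeq}, this shows that well-typed specifications are consistent.
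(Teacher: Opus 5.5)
Your proof is correct and follows the paper's argument: truncate $\pden{e}_{(\rhoIn,\rho^1_\mathit{out})}$ to $\den{\tau}_\rhoIn$, extend $\rho^1_\mathit{out}$ with $x$, apply the induction hypothesis, and close the head equation with Lemma~\ref{lem:pden-den}; your witness $\rho^2_\mathit{out}[x \mapsto w]$ makes explicit the repackaging into the form $\rho^1_\mathit{out}\cdot\rho^3_\mathit{out}$ that the paper leaves implicit. Your side remark that the rules admit references to $x$ through \textsc{Hold} is inaccurate, since \textsc{HoldOut} requires $x \in \mathit{dom}(\Gamma)$ and \textsc{Equation} excludes this, but nothing in your argument depends on it.
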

\begin{proof}
  The proof proceeds by induction on the derivation $\Gamma \vdash S$.
  The case $S = \epsilon$ is straightforward.
  Now suppose $S = \strdef{x}{\tau}{e} \cdot S'$,
  suppose $\Gamma \vdash e : \tau$, and by induction assume $\Gamma, x : \tau \vDash S'$.
  For a fixed $\rhoIn$ and a fixed $\rho^1_\mathit{out} \in \den{\Gamma}_\rhoIn$, we have to show the existence of a solution $\rho^2_\mathit{out}$ to $\strdef{x}{\tau}{e} \cdot S$.
  By Theorem~\ref{thm:sound-expr}, since $\Gamma \vdash e : \tau$, this implies
  $\Gamma \vDash e : \tau$.
  We can then construct a stream $w_e$ as follows: \begin{align*}
    w_e(n) \triangleq \begin{cases}
      \begin{tabular}{ll}
        $\pden{e}_{(\rhoIn, \rho^1_\mathit{out})}$ &if $n \in \den{\tau}_\rhoIn$\\
        $\bot$ &otherwise
      \end{tabular}
    \end{cases}
  \end{align*}
  By definition of $\Gamma \vDash e : \tau$, we have
  $w_e \in \den{\tau}_\rhoIn$.
  Since $\rho^1_\mathit{out} \in \den{\Gamma}$, we can deduce that $\rho^1_\mathit{out}[x \gets w_e] \in \den{\Gamma, x : \tau}$.
  Applying $\Gamma, x : \tau \vDash S'$ to $\rho^1_\mathit{out}[x \gets w_e]$, we obtain a solution $\rho'_\mathit{out}$ to $S'$.
  I.e., $(\rhoIn, \rho^1_\mathit{out}[x \gets w_e] \cdot \rho'_\mathit{out}) \in \den{S'}$.
  By definition of $w_e$ and Lemma~\ref{lem:pden-den},
  $\rho^1_\mathit{out}[x \gets w_e] \cdot \rho'_\mathit{out}$ is also a solution to the equation $\strdef{x}{\tau}{e}$.
  Overall, we have $(\rhoIn, \rho^1[x \gets w_e] \cdot \rho'_\mathit{out}) \in \den{\strdef{x}{\tau}{e} \cdot S}$, which concludes the proof.
  \qed
\end{proof}

As an immediate consequence of Theorem~\ref{thm:adeq} and Theorem~\ref{thm:sound-equ}, we obtain the soundness of our type system for consistency.

\begin{corollary}[Type Soundness]
  $\emptyset \vdash S \implies \mathit{Consistent}(S)$
\end{corollary}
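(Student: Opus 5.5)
The corollary follows by chaining the two results just established. Assume $\emptyset \vdash S$. Instantiating Theorem~\ref{thm:sound-equ} with the empty context $\Gamma = \emptyset$ gives the semantic judgement $\emptyset \vDash S$. Applying Theorem~\ref{thm:adeq} to $\emptyset \vDash S$ then yields $\mathit{Consistent}(S)$, which is the claim.

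The only point that needs checking is that the empty context $\emptyset$ in the syntactic judgement is the same object as the $\emptyset$ in the adequacy theorem. In both, it is the typing context with empty domain.

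Its interpretation $\den{\emptyset}_\rhoIn$ is the set of partial maps $\rhoOut$ with $\mathit{dom}(\rhoOut) = \emptyset$, subject to no pacing constraints. This set is exactly $\{\lambda\_.\ \bot\}$, which is the singleton the proof of Theorem~\ref{thm:adeq} relies on.

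The totalization $(\lambda\_.\ \bot)\cdot\rho^2_\mathit{out}$ equals $\rho^2_\mathit{out}$. So the witness returned by $\emptyset \vDash S$ is directly an output map with $(\rhoIn, \rho^2_\mathit{out}) \in \den{S}$.

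I expect no real obstacle here. All the substantive work, namely the induction over typing derivations, the construction of the stream $w_e$, and the agreement between partial and total semantics (Lemma~\ref{lem:pden-den}), has already been done in Theorems~\ref{thm:sound-expr} and~\ref{thm:sound-equ}. The corollary is a two-line composition of those results.
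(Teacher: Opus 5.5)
Your proposal is correct and matches the paper's argument: instantiate Theorem~\ref{thm:sound-equ} at $\Gamma = \emptyset$ to get $\emptyset \vDash S$, then apply Theorem~\ref{thm:adeq}. Your extra check that $\den{\emptyset}_\rhoIn = \{\lambda\_.\ \bot\}$, and that totalizing this map with $\rho^2_\mathit{out}$ gives back $\rho^2_\mathit{out}$, is exactly what the paper's adequacy proof uses implicitly.
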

  
  \section{Sound Extensions of the Type System}
\label{sec:limitations}

While the type system presented in Figure~\ref{fig:typing-v1} is
sound, it is also incomplete: it may reject equations that are consistent.
This section describes two simple extensions of the type system to mitigate incompleteness.

\headline{Ordering of Equations}

A first source of incompleteness originates from the order-sensitive nature of our type system.
As an example, the following two (consistent) equations do not type-check in this order, because $y$ is defined after $x$ and $x$ uses $y$. However, they are well-typed if we swap $x$ and $y$.

\begin{minipage}{.2\textwidth}
\begin{align*}
  &\strdef{x}{i}{y}\\
  &\strdef{y}{i}{i}
\end{align*}
\end{minipage}%
\begin{minipage}{.8\textwidth}
  \begin{align*}
  \inferrule[Reorder]{
    S' \ \textrm{is a permutation of} \ S \\ \Gamma \vdash S'
  }{\Gamma \vdash S}
\end{align*}
\end{minipage}

\medskip

\noindent This issue is easily fixed by introducing the rule \textsc{Reorder} above. This extension is shown to preserve soundness of the type system using the same semantic model as in Section~\ref{sec:proof}, and by observing that the semantics of equations is insensitive to re-orderings of the equations.

\headline{Self References}

Another source of incompleteness stems from the presence of self-references in stream equations.
Equations such as $\strdef{x}{\syn{$\top$}}{\strlast{x}{0} + 1}$ are consistent, but rejected by the typing rules of Figure~\ref{fig:typing-v1}.
However, such self \syn{prev}-access
are very useful to implement counters or other stateful computations.
To enable self \syn{prev}-accesses, we extend our type system to make it self-reference-aware. Concretely, we replace judgements $\Gamma \vdash e : \tau$ by judgements $\Gamma \vdash^x e : \tau$, where the additional annotation $x \in \Vout$ indicates the name of the stream containing $e$ as a sub-expression.
With this change, the rule for equations is updated to track self-references, and an additional rule \textsc{PrevSelf} can be added to support self \syn{prev}-accesses. \begin{mathpar}
  \inferrule[PrevSelf]{x = y \\ \Gamma \vdash^x e : \tau}{\Gamma \vdash^x \strlast{y}{e} : \tau}\qquad
  \inferrule[Equation]{ \Gamma \vdash^x e : \tau \\ \Gamma, x : \tau \vdash S }{\Gamma \vdash \strdef{x}{\tau}{e} \cdot S}
\end{mathpar}

\noindent All other rules are kept unchanged, and simply ignore the extra annotation. This extension preserves soundness, but the formal proof is significantly more complicated.
In particular, the extra annotation $x$ in the typing judgement for expressions needs a semantic interpretation. We omit the formalization here, and refer the curious readers to our Rocq development for a formal proof of soundness of this extension.

We note that beyond self \syn{prev}-accesses, other types of self-references (like self \syn{hold}-access, or equations of the form $x = x$) are considered to be anti-patterns even when they have models, and therefore do not need to be covered.

  \section{Conclusion}
In this paper, we formalized the core of \textsc{RTLola}'s pacing annotations, a mechanism for data synchronization in stream-based monitoring.
We provided a formal semantics interpreting annotated stream equations as relations between input and output streams.
We showed that pacing annotations can introduce inconsistencies and addressed this by treating pacing annotations as types.
The resulting type system rejects contradictory annotations, and we proved its soundness with respect to the formal semantics.

While the presented fragment of \textsc{RTLola} is already practical, extending it to the whole language remains future work.
Another interesting research direction would be to relate the type system to an operational semantics, establishing not only the existence of models but also their computability.

\section*{Data Availability}

Our Rocq development containing the whole formalization of \CLola, its type system (and the extensions presented in Section~\ref{sec:limitations}), and the soundness proof are available online at the following addresses: \begin{center}
  \url{https://github.com/acorrenson/pacing-types} (\textit{latest updates})\\
  or\\
  \url{https://doi.org/10.5281/zenodo.18759084} (\textit{artifact})
\end{center}

\section*{Acknowledgements}

This work was supported by the DFG in project
389792660 (TRR 248 – CPEC) and the European Research Council (ERC) Grant HYPER (No. 101055412). Views and opinions expressed are however those of the authors only and do not necessarily reflect those of the European Union or the European Research Council Executive Agency.
Neither the European Union nor the granting authority can be held responsible for them.
F.~Kohn, A.~Correnson, and J.~Baumeister carried out this work as members of the Saarbr\"ucken Graduate School of Computer Science.

  \bibliographystyle{plain}
  \bibliography{paper}

@inproceedings{DBLP:conf/fm/BaumeisterFKS24,
  author    = {Jan Baumeister and
               Bernd Finkbeiner and
               Florian Kohn and
               Frederik Scheerer},
  editor    = {Andr{\'{e}} Platzer and
               Kristin Yvonne Rozier and
               Matteo Pradella and
               Matteo Rossi},
  title     = {A Tutorial on Stream-Based Monitoring},
  booktitle = {Formal Methods - 26th International Symposium, {FM} 2024, Milan, Italy,
               September 9-13, 2024, Proceedings, Part {II}},
  series    = {Lecture Notes in Computer Science},
  volume    = {14934},
  pages     = {624--648},
  publisher = {Springer},
  year      = {2024},
  url       = {https://doi.org/10.1007/978-3-031-71177-0\_33},
  doi       = {10.1007/978-3-031-71177-0\_33},
  bibsource = {dblp computer science bibliography, https://dblp.org}
}

@inproceedings{d2005lola,
  title        = {LOLA: runtime monitoring of synchronous systems},
  author       = {d'Angelo, Ben and Sankaranarayanan, Sriram and S{\'a}nchez, C{\'e}sar and Robinson, Will and Finkbeiner, Bernd and Sipma, Henny B and Mehrotra, Sandeep and Manna, Zohar},
  booktitle    = {12th International Symposium on Temporal Representation and Reasoning (TIME'05)},
  pages        = {166--174},
  year         = {2005},
  organization = {IEEE}
}

@inproceedings{baumeister2020rtlola,
  title        = {RTLola cleared for take-off: monitoring autonomous aircraft},
  author       = {Baumeister, Jan and Finkbeiner, Bernd and Schirmer, Sebastian and Schwenger, Maximilian and Torens, Christoph},
  booktitle    = {Computer Aided Verification: 32nd International Conference, CAV 2020, Los Angeles, CA, USA, July 21--24, 2020, Proceedings, Part II 32},
  pages        = {28--39},
  year         = {2020},
  organization = {Springer}
}

@inproceedings{DBLP:conf/rv/KallwiesLSSTW22,
  author    = {Hannes Kallwies and
               Martin Leucker and
               Malte Schmitz and
               Albert Schulz and
               Daniel Thoma and
               Alexander Weiss},
  editor    = {Thao Dang and
               Volker Stolz},
  title     = {TeSSLa - An Ecosystem for Runtime Verification},
  booktitle = {Runtime Verification - 22nd International Conference, {RV} 2022, Tbilisi,
               Georgia, September 28-30, 2022, Proceedings},
  series    = {Lecture Notes in Computer Science},
  volume    = {13498},
  pages     = {314--324},
  publisher = {Springer},
  year      = {2022},
  doi       = {10.1007/978-3-031-17196-3\_20},
  bibsource = {dblp computer science bibliography, https://dblp.org}
}

@article{bauer2011runtime,
  title     = {Runtime verification for LTL and TLTL},
  author    = {Bauer, Andreas and Leucker, Martin and Schallhart, Christian},
  journal   = {ACM Transactions on Software Engineering and Methodology (TOSEM)},
  volume    = {20},
  number    = {4},
  pages     = {1--64},
  year      = {2011},
  publisher = {ACM New York, NY, USA}
}

@inproceedings{Lola2.0,
  author    = {Faymonville, Peter
               and Finkbeiner, Bernd
               and Schirmer, Sebastian
               and Torfah, Hazem},
  editor    = {Falcone, Yli{\`e}s
               and S{\'a}nchez, C{\'e}sar},
  title     = {A Stream-Based Specification Language for Network Monitoring},
  booktitle = {Runtime Verification},
  year      = {2016},
  publisher = {Springer International Publishing},
  address   = {Cham},
  pages     = {152--168},
  isbn      = {978-3-319-46982-9}
}

@inproceedings{DBLP:conf/cav/FaymonvilleFSSS19,
  author    = {Peter Faymonville and
               Bernd Finkbeiner and
               Malte Schledjewski and
               Maximilian Schwenger and
               Marvin Stenger and
               Leander Tentrup and
               Hazem Torfah},
  editor    = {Isil Dillig and
               Serdar Tasiran},
  title     = {StreamLAB: Stream-based Monitoring of Cyber-Physical Systems},
  booktitle = {Computer Aided Verification - 31st International Conference, {CAV}
               2019, New York City, NY, USA, July 15-18, 2019, Proceedings, Part
               {I}},
  series    = {Lecture Notes in Computer Science},
  volume    = {11561},
  pages     = {421--431},
  publisher = {Springer},
  year      = {2019},
  doi       = {10.1007/978-3-030-25540-4\_24},
  bibsource = {dblp computer science bibliography, https://dblp.org}
}

@inproceedings{DBLP:conf/fm/GorostiagaS21,
  author    = {Felipe Gorostiaga and
               C{\'{e}}sar S{\'{a}}nchez},
  editor    = {Marieke Huisman and
               Corina S. Pasareanu and
               Naijun Zhan},
  title     = {HStriver: {A} Very Functional Extensible Tool for the Runtime Verification
               of Real-Time Event Streams},
  booktitle = {Formal Methods - 24th International Symposium, {FM} 2021, Virtual
               Event, November 20-26, 2021, Proceedings},
  series    = {Lecture Notes in Computer Science},
  volume    = {13047},
  pages     = {563--580},
  publisher = {Springer},
  year      = {2021},
  doi       = {10.1007/978-3-030-90870-6\_30},
  bibsource = {dblp computer science bibliography, https://dblp.org}
}

@inproceedings{DBLP:conf/cav/JungesTS20,
  author    = {Sebastian Junges and
               Hazem Torfah and
               Sanjit A. Seshia},
  editor    = {Alexandra Silva and
               K. Rustan M. Leino},
  title     = {Runtime Monitors for Markov Decision Processes},
  booktitle = {Computer Aided Verification - 33rd International Conference, {CAV}
               2021, Virtual Event, July 20-23, 2021, Proceedings, Part {II}},
  series    = {Lecture Notes in Computer Science},
  volume    = {12760},
  pages     = {553--576},
  publisher = {Springer},
  year      = {2021},
  doi       = {10.1007/978-3-030-81688-9\_26},
  bibsource = {dblp computer science bibliography, https://dblp.org}
}

@inproceedings{DBLP:conf/cav/HenzingerKKM23,
  author    = {Thomas A. Henzinger and
               Mahyar Karimi and
               Konstantin Kueffner and
               Kaushik Mallik},
  editor    = {Constantin Enea and
               Akash Lal},
  title     = {Monitoring Algorithmic Fairness},
  booktitle = {Computer Aided Verification - 35th International Conference, {CAV}
               2023, Paris, France, July 17-22, 2023, Proceedings, Part {II}},
  series    = {Lecture Notes in Computer Science},
  volume    = {13965},
  pages     = {358--382},
  publisher = {Springer},
  year      = {2023},
  doi       = {10.1007/978-3-031-37703-7\_17},
  bibsource = {dblp computer science bibliography, https://dblp.org}
}

@inproceedings{DBLP:conf/cav/BasinKZ17,
  author    = {David A. Basin and
               Felix Klaedtke and
               Eugen Zalinescu},
  editor    = {Rupak Majumdar and
               Viktor Kuncak},
  title     = {Runtime Verification of Temporal Properties over Out-of-Order Data
               Streams},
  booktitle = {Computer Aided Verification - 29th International Conference, {CAV}
               2017, Heidelberg, Germany, July 24-28, 2017, Proceedings, Part {I}},
  series    = {Lecture Notes in Computer Science},
  volume    = {10426},
  pages     = {356--376},
  publisher = {Springer},
  year      = {2017},
  doi       = {10.1007/978-3-319-63387-9\_18},
  bibsource = {dblp computer science bibliography, https://dblp.org}
}

@article{DBLP:journals/pieee/HalbwachsCRP91,
  author    = {Nicolas Halbwachs and
               Paul Caspi and
               Pascal Raymond and
               Daniel Pilaud},
  title     = {The synchronous data flow programming language {LUSTRE}},
  journal   = {Proc. {IEEE}},
  volume    = {79},
  number    = {9},
  pages     = {1305--1320},
  year      = {1991},
  doi       = {10.1109/5.97300},
  bibsource = {dblp computer science bibliography, https://dblp.org}
}

@inproceedings{DBLP:conf/concur/BerryC84,
  author    = {G{\'{e}}rard Berry and
               Laurent Cosserat},
  editor    = {Stephen D. Brookes and
               A. W. Roscoe and
               Glynn Winskel},
  title     = {The {ESTEREL} Synchronous Programming Language and its Mathematical
               Semantics},
  booktitle = {Seminar on Concurrency, Carnegie-Mellon University, Pittsburg, PA,
               USA, July 9-11, 1984},
  series    = {Lecture Notes in Computer Science},
  volume    = {197},
  pages     = {389--448},
  publisher = {Springer},
  year      = {1984},
  doi       = {10.1007/3-540-15670-4\_19},
  bibsource = {dblp computer science bibliography, https://dblp.org}
}

@inproceedings{volostream,
  author    = {Jan Baumeister and
               Bernd Finkbeiner and
               Florian Kohn and
               Florian L{\"{o}}hr and
               Guido Manfredi and
               Sebastian Schirmer and
               Christoph Torens},
  editor    = {Arie Gurfinkel and
               Vijay Ganesh},
  title     = {Monitoring Unmanned Aircraft: Specification, Integration, and Lessons-Learned},
  booktitle = {Computer Aided Verification - 36th International Conference, {CAV}
               2024, Montreal, QC, Canada, July 24-27, 2024, Proceedings, Part {II}},
  series    = {Lecture Notes in Computer Science},
  volume    = {14682},
  pages     = {207--218},
  publisher = {Springer},
  year      = {2024},
  doi       = {10.1007/978-3-031-65630-9\_10},
  bibsource = {dblp computer science bibliography, https://dblp.org}
}

@article{DBLP:journals/pacmpl/ChenMABJSWZJ24,
  author    = {Jiawei Chen and
               Jos{\'{e}} Luiz Vargas de Mendon{\c{c}}a and
               Bereket Ayele and
               Bereket Ngussie Bekele and
               Shayan Jalili and
               Pranjal Sharma and
               Nicholas Wohlfeil and
               Yicheng Zhang and
               Jean{-}Baptiste Jeannin},
  title     = {Synchronous Programming with Refinement Types},
  journal   = {Proc. {ACM} Program. Lang.},
  volume    = {8},
  number    = {{ICFP}},
  pages     = {938--972},
  year      = {2024},
  url       = {https://doi.org/10.1145/3674657},
  doi       = {10.1145/3674657},
  bibsource = {dblp computer science bibliography, https://dblp.org}
}

@inproceedings{DBLP:conf/hybrid/BenvenisteBCPP14,
  author    = {Albert Benveniste and
               Timothy Bourke and
               Beno{\^{\i}}t Caillaud and
               Bruno Pagano and
               Marc Pouzet},
  editor    = {Martin Fr{\"{a}}nzle and
               John Lygeros},
  title     = {A type-based analysis of causality loops in hybrid systems modelers},
  booktitle = {17th International Conference on Hybrid Systems: Computation and Control
               (part of {CPS} Week), HSCC'14, Berlin, Germany, April 15-17, 2014},
  pages     = {71--82},
  publisher = {{ACM}},
  year      = {2014},
  url       = {https://doi.org/10.1145/2562059.2562125},
  doi       = {10.1145/2562059.2562125},
  bibsource = {dblp computer science bibliography, https://dblp.org}
}

@inproceedings{DBLP:conf/emsoft/BenvenisteBCP11,
  author    = {Albert Benveniste and
               Timothy Bourke and
               Beno{\^{\i}}t Caillaud and
               Marc Pouzet},
  editor    = {Samarjit Chakraborty and
               Ahmed Jerraya and
               Sanjoy K. Baruah and
               Sebastian Fischmeister},
  title     = {A hybrid synchronous language with hierarchical automata: static typing
               and translation to synchronous code},
  booktitle = {Proceedings of the 11th International Conference on Embedded Software,
               {EMSOFT} 2011, part of the Seventh Embedded Systems Week, ESWeek 2011,
               Taipei, Taiwan, October 9-14, 2011},
  pages     = {137--148},
  publisher = {{ACM}},
  year      = {2011},
  url       = {https://doi.org/10.1145/2038642.2038664},
  doi       = {10.1145/2038642.2038664},
  bibsource = {dblp computer science bibliography, https://dblp.org}
}

@inproceedings{DBLP:conf/focs/Pnueli77,
  author    = {Amir Pnueli},
  title     = {The Temporal Logic of Programs},
  booktitle = {18th Annual Symposium on Foundations of Computer Science, Providence,
               Rhode Island, USA, 31 October - 1 November 1977},
  pages     = {46--57},
  publisher = {{IEEE} Computer Society},
  year      = {1977},
  url       = {https://doi.org/10.1109/SFCS.1977.32},
  doi       = {10.1109/SFCS.1977.32},
  bibsource = {dblp computer science bibliography, https://dblp.org}
}

@inproceedings{DBLP:conf/ecrts/KimVBKLS99,
  author    = {Moonjoo Kim and
               Mahesh Viswanathan and
               Han{\^{e}}ne Ben{-}Abdallah and
               Sampath Kannan and
               Insup Lee and
               Oleg Sokolsky},
  title     = {Formally specified monitoring of temporal properties},
  booktitle = {11th Euromicro Conference on Real-Time Systems {(ECRTS} 1999), 9-11
               June 1999, York, England, UK, Proceedings},
  pages     = {114--122},
  publisher = {{IEEE} Computer Society},
  year      = {1999},
  url       = {https://doi.org/10.1109/EMRTS.1999.777457},
  doi       = {10.1109/EMRTS.1999.777457},
  bibsource = {dblp computer science bibliography, https://dblp.org}
}

@inproceedings{DBLP:conf/icse/JinMLR12,
  author    = {Dongyun Jin and
               Patrick O'Neil Meredith and
               Choonghwan Lee and
               Grigore Rosu},
  editor    = {Martin Glinz and
               Gail C. Murphy and
               Mauro Pezz{\`{e}}},
  title     = {JavaMOP: Efficient parametric runtime monitoring framework},
  booktitle = {34th International Conference on Software Engineering, {ICSE} 2012,
               June 2-9, 2012, Zurich, Switzerland},
  pages     = {1427--1430},
  publisher = {{IEEE} Computer Society},
  year      = {2012},
  url       = {https://doi.org/10.1109/ICSE.2012.6227231},
  doi       = {10.1109/ICSE.2012.6227231},
  bibsource = {dblp computer science bibliography, https://dblp.org}
}

@inproceedings{DBLP:conf/fpca/GautierG87,
  author    = {Thierry Gautier and
               Paul {Le Guernic}},
  editor    = {Gilles Kahn},
  title     = {{SIGNAL:} {A} declarative language for synchronous programming of
               real-time systems},
  booktitle = {Functional Programming Languages and Computer Architecture, Portland,
               Oregon, USA, September 14-16, 1987, Proceedings},
  series    = {Lecture Notes in Computer Science},
  volume    = {274},
  pages     = {257--277},
  publisher = {Springer},
  year      = {1987},
  url       = {https://doi.org/10.1007/3-540-18317-5\_15},
  doi       = {10.1007/3-540-18317-5\_15},
  bibsource = {dblp computer science bibliography, https://dblp.org}
}

@inproceedings{DBLP:conf/emsoft/ColacoP03,
  author    = {Jean{-}Louis Cola{\c{c}}o and
               Marc Pouzet},
  editor    = {Rajeev Alur and
               Insup Lee},
  title     = {Clocks as First Class Abstract Types},
  booktitle = {Embedded Software, Third International Conference, {EMSOFT} 2003,
               Philadelphia, PA, USA, October 13-15, 2003, Proceedings},
  series    = {Lecture Notes in Computer Science},
  volume    = {2855},
  pages     = {134--155},
  publisher = {Springer},
  year      = {2003},
  url       = {https://doi.org/10.1007/978-3-540-45212-6\_10},
  doi       = {10.1007/978-3-540-45212-6\_10},
  bibsource = {dblp computer science bibliography, https://dblp.org}
}

@inproceedings{DBLP:conf/mpc/MandelPP10,
  author    = {Louis Mandel and
               Florence Plateau and
               Marc Pouzet},
  editor    = {Claude Bolduc and
               Jos{\'{e}}e Desharnais and
               B{\'{e}}chir Ktari},
  title     = {Lucy-n: a n-Synchronous Extension of Lustre},
  booktitle = {Mathematics of Program Construction, 10th International Conference,
               {MPC} 2010, Qu{\'{e}}bec City, Canada, June 21-23, 2010. Proceedings},
  series    = {Lecture Notes in Computer Science},
  volume    = {6120},
  pages     = {288--309},
  publisher = {Springer},
  year      = {2010},
  url       = {https://doi.org/10.1007/978-3-642-13321-3\_17},
  doi       = {10.1007/978-3-642-13321-3\_17},
  bibsource = {dblp computer science bibliography, https://dblp.org}
}

@inproceedings{DBLP:conf/fm/PerezGD24,
  author    = {Ivan Perez and
               Alwyn E. Goodloe and
               Frank Dedden},
  editor    = {Andr{\'{e}} Platzer and
               Kristin Yvonne Rozier and
               Matteo Pradella and
               Matteo Rossi},
  title     = {Runtime Verification in Real-Time with the Copilot Language: {A} Tutorial},
  booktitle = {Formal Methods - 26th International Symposium, {FM} 2024, Milan, Italy,
               September 9-13, 2024, Proceedings, Part {II}},
  series    = {Lecture Notes in Computer Science},
  volume    = {14934},
  pages     = {469--491},
  publisher = {Springer},
  year      = {2024},
  url       = {https://doi.org/10.1007/978-3-031-71177-0\_27},
  doi       = {10.1007/978-3-031-71177-0\_27},
  bibsource = {dblp computer science bibliography, https://dblp.org}
}

\end{document}